\documentclass[aps,prl,showpacs,twocolumn,floatfix]{revtex4-1}
\usepackage{amsmath}
\usepackage{amsfonts}
\usepackage{amssymb}
\usepackage{color}
\usepackage{datetime}
\usepackage{flushend}
\usepackage{graphicx}
\usepackage[colorlinks=true,allcolors=black]{hyperref}
\usepackage{times}

\usepackage[makeroom]{cancel}
\usepackage{layouts}


\newenvironment{proof}%
{\vspace{\baselineskip}\par\noindent{\textbf{Proof}\ }}%
{\QED}

\renewcommand{\@}{\partial}
\providecommand{\abs}[1]{\left\lvert #1 \right\rvert}

\newcommand{\cconj}[1]{{#1}^*}

\renewcommand{\d}{\mathrm{d}}

\newcommand{\dff}[3]{\frac{\partial^2 #1}{\partial #2\partial #3}}
\newcommand{\E}[1]{\mathrm{E}\left[#1\right]}			
\newcommand{\eq}[1]{(\ref{#1})}
\def\eqtwo(#1,#2){(\ref{#1},\ref{#2})}

\newcommand{\fig}[1]{fig.~\ref{#1}}
\newcommand{\Fig}[1]{Fig.~\ref{#1}}

\newcommand{\intinf}{\int\limits_{-\infty}^{\infty}}

\newcommand{\mean}[1]{\overline{#1}}
\newcommand{\mes}[2]{\mathop{\mathrm{Vol}}_{#1}#2}
\providecommand{\norm}[1]{\left\lVert#1\right\rVert}
\renewcommand{\O}[1]{O\left(#1\right)}				
\renewcommand{\P}[1]{\mathbb{P}\left[#1\right]}			
\newcommand{\qed}{\rule{1ex}{1ex}\vspace{0mm}}
\newcommand{\QED}{\hfill\qed\par}

\newcommand{\Real}{\mathbb{R}}
\newcommand{\Torus}{\mathbb{T}}

\newcommand{\summ}{\sideset{}{'}\sum}				

\newcommand{\Span}{\mathop{\mathrm{span}}}                      
\newcommand{\Zahlen}{\mathbb{Z}}

\ifx\notcolor\undefined\newcommand{\notcolor}{blue}\else\fi
\newcommand{\+}[2]{\def#1{{\color{\notcolor}#2}}}
\newcommand{\1}[2]{\def#1##1{{\color{\notcolor}#2}}}

\+{\A}{A}		
\+{\a}{a}		
\+{\ai}{\alpha'} \+{\aii}{\alpha''} 
\+{\ap}{\alpha_+} \+{\am}{\alpha_-} 
\+{\al}{\alpha}         
\+{\arbF}{\phi} \+{\arbG}{\psi} 
\+{\Ball}{B}		
\+{\mesBall}{\beta}	
\+{\C}{C}		
\+{\D}{D}		
\+{\devn}{\Delta}	
\+{\dfreqq}{\delta}	
\+{\eps}{\epsilon}      
\+{\F}{F}		
\+{\f}{f}		
\+{\fx}{\eta}		
\+{\Freqset}{N}         
\+{\freq}{\omega}	
\+{\freqn}{\freq_{\m+1}}        
\+{\freqx}{\tilde{\freq}}       
\+{\G}{\mathcal{G}}     
\+{\g}{g}		       
\+{\gx}{\xi}	       
\+{\I}{I}               
\+{\j}{j}		        
\+{\K}{k}		
\+{\l}{\ell}               
\+{\m}{m}		
\+{\mq}{k}		
\1{\meansizeT}{\mu_{#1}}        
\+{\mspath}{\varpi_{\infty}}    
\1{\mspathT}{\varpi_{#1}}       
\+{\mspathsq}{\varpi_{\infty}^2}        
\1{\mspathsqT}{\varpi_{#1}^2}   
\+{\mssize}{\sigma_{\infty}}    
\1{\mssizeT}{\sigma_{#1}}       
\+{\mssizesq}{\sigma_{\infty}^2}        
\1{\mssizesqT}{\sigma_{#1}^2}   
\+{\N}{\mathbf{n}}      
\+{\n}{n}	        
\+{\ni}{\n'}	        
\+{\nii}{\n''}	        
\+{\param}{\lambda}     
\+{\phase}{\theta}	
\+{\phasen}{\phase_{\m+1}}  
\+{\phasex}{\tilde{\phase}} 
\+{\sig}{\chi}          
\+{\T}{T}	        
\+{\t}{t}	        
\+{\tipp}{p}		
\+{\mtipp}{\mean{\tipp}}	
\+{\tipangle}{\varphi}  
\+{\Tipangle}{\Phi}	
\+{\Tiprot}{h}   	
\+{\vvv}{\tilde v}   	
\+{\tiprot}{\Omega}     
\+{\tipx}{\tipp_x}	
\+{\tipy}{\tipp_y}	
\+{\Tipv}{v}		
\1{\Tipvr}{v^r_{#1}}    
\1{\Tipvi}{v^i_{#1}}    
\+{\tipv}{V}		
\+{\tipvx}{\tipv_x}	
\+{\tipvy}{\tipv_y}	
\+{\termD}{D}		
\+{\termM}{M}		
\+{\termP}{P}		
\+{\termS}{S}		
\+{\r}{\mathbf{r}}	
\+{\veloc}{s}		
\+{\Uspace}{\mathcal{U}}
\+{\u}{u}		
\+{\um}{\tilde{u}}	
\+{\w}{W}		
\+{\X}{X}               
\+{\Xspace}{\mathcal{X}}
\+{\x}{x}               
\+{\z}{z}               
\+{\zi}{\z'}            
\+{\zii}{\z''}          
\+{\ziii}{\zeta}        

\newcommand{\myfigure}[3]{\begin{figure}[htb] #1 \caption[]{#2} \label{#3} \end{figure}}
\newcommand{\dblfigure}[3]{\begin{figure*}[htb] #1 \caption[]{#2} \label{#3} \end{figure*}}

\newtheorem{proposition}{Proposition}

\newcommand{\Quoziente}{\mathbb{Q}}

\begin{document}

\title{St. Petersburg paradox for quasiperiodically hypermeandering spiral waves}
\author{V. N. Biktashev}
\affiliation{Department of Mathematics, 
University of Exeter, Exeter EX4 4QF, UK}
\author{I. Melbourne}
\affiliation{Mathematics Institute, 
University of Warwick, Coventry CV4 7AL, UK}
\date{\yyyymmdddate\today\ \currenttime}

\makeatletter
\typeout{author=\meaning\@author}
\makeatother

\begin{abstract}
It is known that quasiperiodic hypermeander of spiral waves almost
certainly produces a bounded trajectory for the spiral tip.  We analyse
the size of this trajectory.  We show that this deterministic question
does not have a physically sensible deterministic answer and requires
probabilistic treatment.  In probabilistic terms, the size of the
hypermeander trajectory proves to have an infinite expectation, despite
being finite with probability one.
This can be viewed as a physical manifestation of the classical
  ``St. Petersburg paradox'' from probability theory and economics.
\end{abstract}

\pacs{%
  02.90.+p
}

\maketitle

Rotating spiral waves are a class of self-organized patterns observed
in a large variety of spatially extended thermodynamically nonequilibrium
systems with oscillatory or excitable local dynamics, of
physical, chemical or biological nature~\cite{%
  Zhabotinsky-Zaikin-1971,%
  Allessie-etal-1973,%
  Alcantara-Monk-1974,%
  Carey-etal-1978,%
  Gorelova-Bures-1983,%
  Murray-etal-1986,%
  Schulman-Seiden-1986,%
  Madore-Freedman-1987,%
  Jakubith-etal-1990, %
  Lechleiter-etal-1991,%
  Frisch-etal-1994,%
  Yu-etal-1999,%
  Agladze-Steinbock-2000,%
  Kastberger-etal-2008%
}. Of particular practical importance are spiral waves of electrical
excitation in the heart muscle, where they underlie dangerous
arrhythmias~\cite{%
  Alonso-etal-2016-PR%
}. Very soon after their experimental discovery in
Belousov-Zhabotinsky reaction, it was noticed that rotation of spiral
waves is not necessarily steady, but their tip can describe a
complicated trajectory, ``meander''~\cite{%
  Winfree-1973%
}. Subsequent mathematical modelling allowed a more detailed
classification of possible types of rotation of spiral waves in ideal
conditions: steady rotation like a rigid body, when the tip of the
spiral travels along a perfect circle; meander, when the solution is
two-periodic and the tip traces a trajectory resembling a roulette
(hypocycloid or epicyloid) trajectory; and more complicated patterns, dubbed
``hypermeander''\cite{%
  Roessler-Kahlert-1979,%
  Zykov-1986,%
  Winfree-1991%
}. Often different types of meander may be observed in the same model
at different values of parameters~\cite{Winfree-1991}, including
cardiac excitation models
(see~\fig{patterns}).
The question of the
spatial extent of the spiral tip path can be of practical
importance. Here we discuss this question for quasiperiodic hypermeander.

\myfigure{\includegraphics{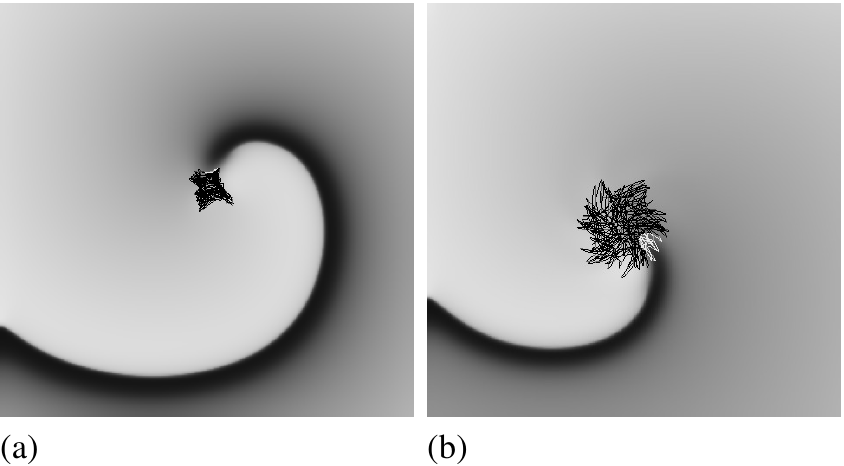}}{%
  Snapshots of anticlockwise rotating spiral waves of electrical
  excitation, together with traces of their tips, in a
  reaction-diffusion model of guinea pig ventricular tissue, (a)
  classical meander in a model with standard
  parameters~\cite{Biktashev-Holden-1996}, (b) hypermeander in the
  same model with parameters changed to represent Long QT
  syndrome~\cite{Biktashev-Holden-1998a}.
}{patterns}

\paragraph{The equations of motion of the meandering spiral tip}
may be derived by the standard procedure of rewriting the underlying partial
differential equations as a skew product~\cite{%
  Barkley-1994,%
  Fiedler-etal-1996,
  Biktashev-etal-1996,
  Sandstede-etal-1997,
  Biktashev-Holden-1998,
  Golubitsky-etal-2000,
  Nicol-etal-2001,%
  Ashwin-etal-2001, 
  Roberts-etal-2002,
  Beyn-Thummler-2004,
  Foulkes-Biktashev-2010,
  Hermann-Gottwald-2010,
  Gottwald-Melbourne-2013
}.  Consider the $\l$-component
reaction-diffusion system on the plane, 
\[
  \partial_\t \u = \D\nabla^2\u+\f(\u), 
  \quad \u(\r,\t)\in\Real^\l,
  \quad \r\in\Real^2,
\]
as a flow in the phase space which
is an infinite-dimensional space of functions
$\Real^2\to\Real^\l$.  The symmetry group is the Euclidean group $\G$ of
transformations of the plane $\g:\Real^2\to\Real^2$ acting on $\Real^2$ by translations
and rotations and thereby acting on functions $\u:\Real^2\to\Real^\l$ by
$\u(\r)\mapsto \u(\g^{-1}\r)$.

Such systems with symmetry, or ``equivariant dynamical systems'' can be cast
into a \emph{skew product} form
\[
  \dot \X = \fx(\X), \qquad \dot \g = \g\gx(\X),
\]
on $\Xspace\times \G$, where the dynamics on the symmetry group $\G$
is driven by the ``shape dynamics'' on a cross-section $\Xspace$
transverse to the group directions. Here, $\g\gx(\X)$ denotes the
action of the group element $\g\in \G$ on vectors $\gx(\X)$ lying in
the Lie algebra of $\G$; $\fx$ and $\gx$ are defined by
components of the vector field along $\Xspace$ and orbits of $\G$
respectively.

The shape dynamics $\dot \X=\fx(\X)$ on the cross-section $\Xspace$
is   a  dynamical system
devoid of symmetries.  Substituting the solution $\X(\t)$ for the
shape dynamics into the $\dot \g$ equation yields the nonautonomous
finite-dimensional equation $\dot \g=\g\gx(\X(\t))$ to be solved for
the group dynamics.

For the Euclidean group $\G$ consisting of planar
translations $\tipp$ and rotations $\tipangle$, the equations become
\begin{equation}				\label{tipt}
  \dot \X = \fx(\X), \qquad
  \dot\tipangle = \Tiprot(\X) , \qquad   		
  \dot\tipp = \Tipv(\X) \; e^{i\tipangle} .	
\end{equation}
The variables $\tipp$ and $\tipangle$ can be interpreted as position and
orientation of the tip of the spiral, then $\X(\t)$ describes
the evolution in the frame comoving with the
tip~\cite{Biktashev-etal-1996,Foulkes-Biktashev-2010}.
Standard low-dimensional attractors in $\Xspace$ produce
the classical tip meandering patterns through the $\dot\tipp$ equation, namely an
equilibrium produces stationary rotation, a limit cycle produces the
two-frequency flower-pattern meander, and more complicated attractors
produce ``hypermeander''.  Hypermeander produced by chaotic base
dynamics is asymptotically a deterministic Brownian motion
\cite{Biktashev-Holden-1998,Nicol-etal-2001}.  Quasiperiodic base
dynamics produce another kind of
hypermeander, with tip trajectories almost certainly bounded, but
exhibiting unlimited directed motion at a dense set of parameter
values \cite{Nicol-etal-2001}. Similar dynamics may be observed
  when a spiral with two-periodic meander is subject to periodic
  external forcing~\cite{Mantel-Barkley-1996}.

Our aim is to characterize the size of a quasiperiodic meandering trajectory
when it is finite.  

\paragraph{The mathematical problem.} 

We assume $\m$-frequency quasiperiodic dynamics in the base system, $\m\ge2$,
with $\X=\phase\in\Torus^\m=(\Real/2\pi\Zahlen)^\m$ being coordinates on the invariant
$\m$-torus, so that the shape dynamics $\dot\X=\fx(\X)$ becomes
\begin{align}
\dot \phase = \freq ,				\label{qp}
\end{align}
where $\freq\in\Real^\m$ is a set of irrationally related
frequencies~\footnote{%
  As discussed earlier, various types of spiral behaviour are
  associated with various types of dynamics (steady-state, periodic,
  quasiperiodic, chaotic) in the base equation $\dot\X=\fx(\X)$.  All
  these types of dynamics are known to occur with positive
  probability.  In particular, KAM theory predicts the existence of
  quasiperiodic dynamics.  In this paper, we take the point of view
  that the base dynamics is known to be quasiperiodic (in accordance
  with the observations
  in~\cite{Roessler-Kahlert-1979,Zykov-1986,Winfree-1991} and analyse
  the consequent behaviour in the full system of equations. %
}. 
The $\dot \g$ equations become
\begin{equation}
      \dot\tipangle = \Tiprot(\phase) , \qquad   		
      \dot\tipp = \Tipv(\phase) \; e^{i\tipangle} .	
\label{extension}
\end{equation}
Equations (\ref{qp},\ref{extension}) comprise a closed 
system describing the trajectory of the quasiperiodic meandering spiral tip.

\paragraph{The $\Real^1$-extension of the quasiperiodic dynamics.} First we
illustrate our main idea for the simpler case where the orientation angle $\tipangle$ is absent and the position $\tipp$ is one-dimensional. 
The shape dynamics remains as in~(\ref{qp}) with $\phase\in\Torus^\mq$, $\mq\ge2$.
Then a point with coordinate 
$\tipp\in\Real^1$ moves according to
\begin{equation}
\dot\tipp = \veloc(\phase)
  = \sum\limits_{\n\in\Zahlen^\mq} \veloc_\n e^{i \n\cdot\phase },	\qquad 
\dot\phase = \freq .					\label{xeqn}
\end{equation}
Termwise
integration gives
\[
\tipp(\t) = \tipp(0)+\veloc_0 \t + \summ\limits_{\n\in\Zahlen^\mq} \frac{-i\veloc_\n}{\n\cdot\freq} \left(e^{i\n\cdot\freq\t}-1\right) ,
\]
where the prime denotes summation over $\n\ne0$.
Consider the infinite sum here, defining the deviation of $\tipp$ from steady motion,
$\devn_\t(\freq)=\tipp(\t)-\tipp(0)-\veloc_0 \t$. 
For an arbitrarily chosen $\freq$, its components are almost certainly
incommensurate, and, moreover, Diophantine. So the denominators in the infinite sum are
nonzero, but many of them are very small; nevertheless they decay slowly with
$\norm{\n}=\left(\n_1^2+\dots+\n_\mq^2\right)^{1/2}$.
 This is compensated by the
fact that if the function $\veloc(\phase)$ is sufficiently smooth, its
Fourier coefficients $\veloc_\n$ in the numerators quickly decay
with $\norm{\n}$.
As a result, the infinite sum remains bounded for
$\t\ge0$, for $\veloc(\phase)$ sufficiently smooth and almost all
$\freq$~\cite{Nicol-etal-2001}.

So if we consider the trajectories in the frame moving with the velocity
$\veloc_0$, we know they are typically confined to a finite space. Now we ask
how large they can be.  The size of a finite piece of trajectory may be
measured in various ways, say by the departure from the initial point
$\devn_\t(\freq)=\tipp(\t)-\tipp(0)$, its time average,
$\meansizeT{\T}(\freq)=\T^{-1}\int_0^\T\devn_\t(\freq)\,\d\t$, and the
corresponding variance,
$\mssizesqT{\T}(\freq)=\T^{-1}\int_0^\T\abs{\devn_\t(\freq)-\meansizeT{\T}(\freq)}^2\,\d\t$.
For instance, as $\T\to\infty$ we obtain
\begin{equation}
\mssizesq(\freq) = \summ\limits_{\n\in\Zahlen^\mq} 
	\frac{\abs{\veloc_{\n}}^2}{(\n\cdot\freq)^2} .	\label{Delta-val}
\end{equation}
By the above arguments, for almost any vector $\freq$, this expression is
finite.  However, as typically all $\veloc_\n$ are nonzero, expression
\eq{Delta-val} is infinite for all $\freq$ for which the denominator is zero,
and for $\mq\ge2$, this is a dense set.  That is,
the function $\mssize(\freq)$ is almost everywhere defined and finite, but is
everywhere discontinuous. The latter property implies that for any physical
purpose, questions about the value of the function at a
particular point are
meaningless, as any uncertainty in the arguments, no matter how small, causes
a non-small, in fact infinite, uncertainty in the value of the function.

Hence, a deterministic view on the function $\mssize(\freq)$ is
inadequate, and we are forced to adopt a probabilistic view. Suppose
we know $\freq$ approximately, say, its probability density is
uniformly distributed in $\Ball=\Ball_{\dfreqq}(\freq_0)$, a ball of
radius $\dfreqq$ centered at $\freq_0$~\footnote{%
  When quasiperiodicity arises via KAM theory, near onset,
  phaselocking leads to a complicated structure for the positive
  measure set $\Freqset$ of frequencies $\freq$ corresponding to
  quasiperiodic dynamics.  The integration should then be over
  $\Freqset\cap\Ball$ rather than the whole ball $\Ball$.  However,
  writing $\Freqset_\param$ to indicate dependence on a parameter
  $\param\to0$, we have in this situation that
  $\mes{\mq}{(\Freqset_\param\cap\Ball)}\to \mes{\mq}{\Ball}$ and hence
  $\lim_{\param\to0}\int_{\Freqset_\param\cap\Ball}
  \abs{\n\cdot\freq}^{-1}\,\d\freq=+\infty$.  So we obtain the same
  conclusion in the limit as $\param\to0$ as before.
}. The expectation of the
trajectory size is then
\[
\E{\mssize}
= \frac{1}{\mesBall} \int\limits_{\Ball} \mssize(\freq) \;\d\freq
= \frac{1}{\mesBall} \int\limits_{\Ball} \left(
  \summ\limits_{\n\in\Zahlen^\mq}  \frac{\abs{\veloc_{\n}}^2}{(\n\cdot\freq)^2} 
  \right)^{1/2} \d\freq ,				
\]
where $\mesBall=\mes{\mq}{(\Ball)}$.
The set of hyperplanes $\n\cdot\freq=0$, $\n\in\Zahlen^\mq$ is dense
so there is an infinite set of $\n\in\Zahlen^\mq$ whose hyperplanes $\n\cdot\freq=0$
cut through $\Ball$. 
For any such $\n$, we have
\[
\E{\mssize}
\geq \frac{1}{\mesBall} \int\limits_{\Ball} 
    \abs{ \frac{\veloc_{\n}}{\n\cdot\freq} } \d\freq.
\]
Then, for some $\A,\eps>0$ depending on $\n$, we have
\[
  \int_{\Ball} \frac{\d\freq}{|\n\cdot\freq|}>\A\int_{-\eps}^{\eps}\,\frac{\d{\z}}{|\z|}=+\infty.
\]
Typically, $\abs{\veloc_{\n}}>0$ for all such $\n$, therefore
we have $\E{\mssize}=+\infty$. 

That is, the deviation from steady motion is almost certainly finite, but its 
average expected value is infinite.

\paragraph{The quasiperiodic hypermeander trajectories.} 

We now return to the equations (\ref{qp},\ref{extension}) governing quasiperiodic hypermeander.
Consider first the $\dot\phase$, $\dot\tipangle$ subsystem
\begin{equation} 
\dot\phase = \freq , \qquad
\dot\tipangle = \Tiprot(\phase) .  \label{rot}
\end{equation}
This has the form of \eq{xeqn} with 
$\mq=\m$,
$\tipp=\tipangle$,
$\veloc=\Tiprot$.
Proceeding as for $\Real^1$-extensions, we obtain
$\tipangle=\tipangle_0+\Tiprot_0 \t + \Tipangle(\phase)$,
where $\Tipangle(\phase) = - i \summ_{\n\in\Zahlen^\m} \Tiprot_\n
\left(e^{i\n\cdot\phase}-1\right)/\n\cdot\freq$.
Substituting into the $\dot\tipp$ equation, we obtain
\[
\dot\tipp=\Tipv(\phase)e^{i\tipangle_0+\Tipangle(\phase)}e^{i\Tiprot_0 \t}
=\Tipv(\phase)e^{i(\tipangle_0+\Tipangle(\phase)+\phase_{\m+1})},
\]
where $\phase_{\m+1}\in\Torus^1$ satisfies the equation
$\dot\phase_{\m+1}=\Tiprot_0$.
Hence the evolution of $\dot\tipp$ is governed by the skew product equations
\begin{equation}
\dot\phasex = \freqx  , \qquad \dot\tipp = \vvv(\phasex) ,	\label{transl}
\end{equation}
where $\freqx=(\freq, \Tiprot_0)\in\Real^{\m+1}$,
$\phasex=(\phase,\phasen)\in\Torus^{\m+1}$ and
\begin{equation}
\vvv(\phasex)=\Tipv(\phase) \, e^{i(\tipangle_0+\Tipangle(\phase)+\phasen)}.
									\label{translv}
\end{equation}
System~\eq{transl} has a similar form to \eq{xeqn}
(separately for the real and imaginary parts of $\tipp$),
except that 
now $\mq=\m+1$,
$\phase\in\Torus^\m$ and $\phasex\in\Torus^{\m+1}$
Also, we notice that due to \eq{translv},
Fourier components $\vvv_\n$ are nonzero only for $\n_{\m+1}=\pm1$,
which implies that $\vvv_0=0$. 
Physically speaking, due to the rotation of the meandering tip, its average spatial
velocity is always zero.
Hence, the function $\mssize(\freqx)$ in this case is just the size of the
trajectory, defined as the root mean square of the distance of the tip
from the centroid of the trajectory.

Based on the results of the previous paragraph, we conclude from here
our main result: for hypermeandering spirals, the long-term average of
the displacement of the tip from its centroid is a random
quantity, which takes finite values with probability one, but has an
infinite expectation. This result is proved rigorously
in~\cite{Melbourne-Biktashev-2018}. The rest of our results below
are at the
physical level of rigour. 

\paragraph{The asymptotic distribution of the trajectory size}
\typeout{distribution}
is fairly generic for typical systems. 
Consider when the trajectory size
\begin{eqnarray}
&&
\mssize(\freqx) 
 = \left(
  \summ\limits_{\n\in\Zahlen^{\m+1}}  \frac{\abs{\Tipv_{\n}(\freqx)}^2}{(\n\cdot\freqx)^2} 
  \right)^{1/2} 
				\label{int-of-hyp-x}
\end{eqnarray}
is large.
This requires that at least one of the terms in the infinite sum is
large. It is most likely that the largest term by far exceeds all
the others. So, the tail of the distribution of $\mssize$ can be understood via
the distribution of individual terms
$\termS_\n(\freqx)=\abs{\Tipv_{\n}(\freqx)}^2/(\n\cdot\freqx)^2$.
Clearly, $\P{\termS_\n>\x^2} \propto \x^{-1}$ as $\x\to+\infty$ 
as long as $\{\n\cdot\freqx=0\}\cap\Ball\ne\emptyset$,
and the distribution of $\mssize$ corresponds to the distribution 
of the square root of the largest of such terms. 
Hence, for a typical continuous distribution of $\freqx$, we expect
\begin{equation} \label{distrib-law}
  \F(\x) \equiv
  \P{\mssize>\x}\propto \x^{-1}, \quad\textrm{as}\quad \x\to+\infty. 
\end{equation}

\paragraph{Growth rate of the trajectory size.}
In practice we can observe the trajectory only for a finite, even if large,
time interval $\T$. Let us see how the expectation of the trajectory size
grows with $\T$. Consider, for instance, the departure from the initial point,
$\devn_\T$.  The exact expression for its square is 
\begin{align*}
  \abs{\devn_\T(\freqx)}^2 &= 
  \summ\limits_{\ni,\nii\in\Zahlen^{\m+1}} \frac{\cconj{\Tipv}_{\ni}\Tipv_{\nii}}{(\ni\cdot\freqx) (\nii\cdot\freqx)} 
  \\ & \mbox{} \times 
  \left( e^{-i\ni\cdot\freqx\T}-1\right) \left( e^{i\nii\cdot\freqx\T}-1\right) .
\end{align*}
Secular growth of the expectation of this series is due to resonant terms,
i.e. those with $\ni$ parallel to $\nii$. If $\vvv(\freqx)$ is smooth and 
$\Tipv_\n$ quickly decay, then the main contribution is by principal resonances $\ni=\nii$. This
gives an approximation
\[
  \abs{\devn_\T}^2 \approx  \summ\limits_{\n\in\Zahlen^{\m+1}}
  \frac{2\abs{\Tipv_{\n}}^2}{(\n\cdot\freqx)^2} 
  \left[1 - \cos\left(\n\cdot\freqx\T\right)\right] .
\]
To evaluate the corresponding expectation,
\[
  \E{\abs{\devn_\T}^2}  = \frac{1}{\mesBall}
  \int\limits_{\freqx\in\Ball}
  \abs{\devn_\T}^2
  \,\d\freqx ,
\]
where
$\mesBall=\mes{\m+1}{(\Ball)}$,
we
substitute $\z=\n\cdot\freqx\T$
and
let $\sig_\n = \mes{\m}{\left(  \{ \freqx \,|\, \n\cdot\freqx=0 \} \cap \Ball \right)}$.
This leads to
\begin{equation} \label{devn-growth}
  \E{\devn_\T^2} \approx \C_1\T, 
  \quad
  \C_1 = \frac{2\pi}{\mesBall}
  \summ\limits_{\n\in\Zahlen^{\m+1}}
  \frac{\abs{\Tipv_{\n}}^2 \sig_{\n}}{\norm{\n}}.
\end{equation}
Detailed calculations are given in the Supplementary
materials, where we also show that under similar assumptions,
\begin{equation} \label{exp-growth}
  \E{\mssizesqT\T} \approx \C_2 \T, 
  \quad
  \C_2 = \frac{\pi}{3\mesBall}
  \summ\limits_{\n\in\Zahlen^{\m+1}}
  \frac{\abs{\Tipv_{\n}}^2 \sig_{\n}}{\norm{\n}}.
\end{equation}

\dblfigure{\includegraphics{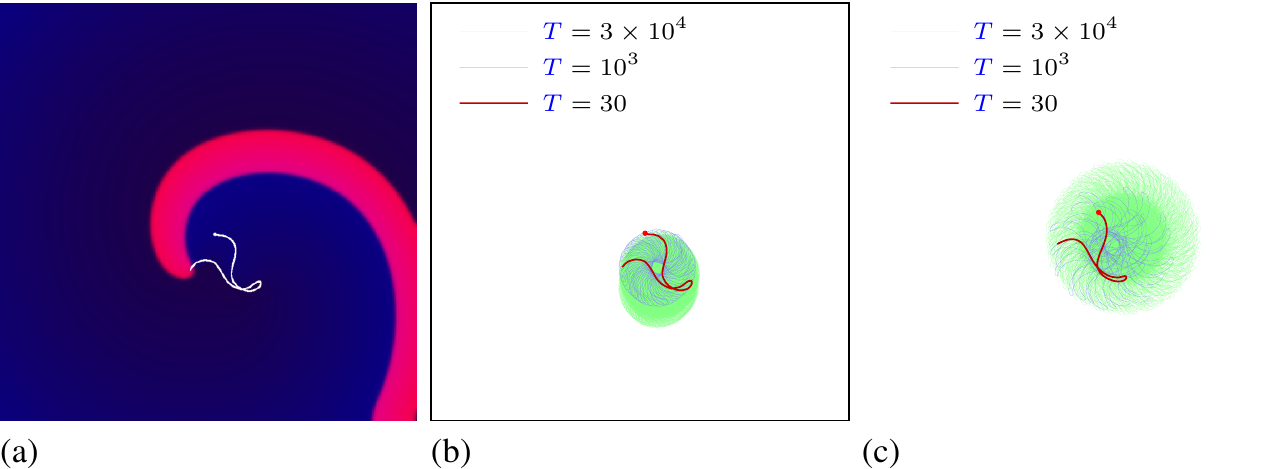}}{(colour online)
  (a) Spiral wave and a piece of meander tip trajectory in FitzHugh-Nagumo model~\eq{fhn}. 
  (b) Longer pieces of the same tip trajectory.
  (c) Pieces of trajectory of different lengths generated by the
  caricature  model~\eq{bm18} for $\freq_2=0.49777$.
}{traj}

\paragraph{Numerical illustration.} \Fig{traj}(a) shows a snapshot of
a spiral wave solution, together with a piece of the corresponding tip
trajectory, for the FitzHugh-Nagumo model \footnote{%
  This was simulated by second-order center space ($h_x=2/9$), forward
  Euler in time ($h_\t=1/125$) differencing in a box $60\times60$ with
  Neumann boundaries. The
  tip of the spiral was defined by $u=v=0$.  %
}, 
\begin{align}
 & u_\t=20(u-u^3/3-v)+\nabla^2u, \nonumber\\
 & v_\t=0.05(u+1.2-0.5v). \label{fhn}
\end{align}
\Fig{traj}(b) shows longer pieces of the tip trajectory, which
illustrates the key feature of hypermeander: the area occupied by the
trajectory can keep growing for a very long time.
We have crudely emulated these
dynamics  by a system
(\ref{tipt},\ref{qp},\ref{extension}) \footnote{%
  This was simulated with forward Euler, with $h_\t=0.3$. Crudeness of
  the method was required to make massive simulations; this does not
  affect the conclusions since this is a caricature model anyway. } with
\begin{eqnarray}
&& \m=2, \quad
   \Tipv(\phase)=\left(0.6-0.2\beta-0.2\alpha\beta\right)^{-1} - 1,
\nonumber\\
&&
   \Tiprot(\phase)
   =\left(0.675+0.1\alpha+0.05\beta+0.5\alpha^2+0.5\alpha\beta
   \right.\nonumber\\ &&\hspace{3em}\left. \mbox{}
   +0.2\alpha^3+0.6\alpha^2\beta\right)^{-1}- 1,
 \nonumber\\
&&\alpha=\cos(\phase_1)+0.05\tanh(30\cos(\phase_2)),
  \quad 
   \beta=\sin(\phase_1),
\nonumber\\
&& \freq_1=0.354, \quad \freq_2\in[0.475,0.525].	\label{bm18}
\end{eqnarray}
This was done in the spirit of~\cite{Ashwin-etal-2001} with the base
dynamics replaced by an explicit two-periodic flow, but with the
view to (i) mimic the actual meander pattern in the PDE model, and
(ii) provide sufficient nonlinearity to ensure abundance of
combination harmonics in~\eq{int-of-hyp-x}.  \Fig{traj}(c) shows
pieces of a trajectory of this ``caricature'' model. 
One can see the same key feature, that the
apparent size of the trajectory very much depends on the interval
of observation; however the details are very sensitive to the choice
of parameters, including $\freq_2$.

\dblfigure{%
   \includegraphics{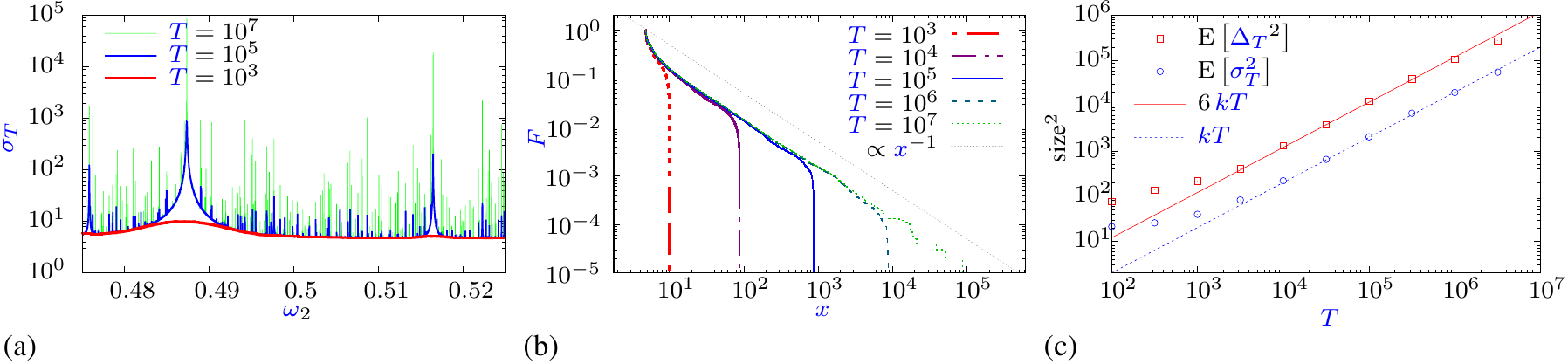}
}{(colour online)%
  (a) Sizes $\mssizeT\T$ of trajectories of different length $\T$, as
  functions of $\freq_2$, semilog plot.  %
  (b) Distribution functions $\F(\x)=\P{\mssizeT\T(\freq_2)>\x}$, for
  the trajectory sizes $\mssizeT{\T}$ for lengths $\T$, log-log plot.
  The straight line is the asymptotic~\eq{distrib-law}. %
  (c) Mean square of trajectory size as function of the time interval, log-log plot.
  The straight lines are the asymptotics~\eq{devn-growth} and
    \eq{exp-growth} with a fitted $\K=\C_2=\C_1/6$.
}{stats}

\Fig{stats}(a) illustrates the approach of $\mssizeT{\T}(\freq_2)$ to an
everywhere discontinuous function as $\T\to\infty$. This was obtained
for $10^5$ values of $\freq_2$ randomly chosen in the shown interval.
For smaller $\T$ one can see well shaped individual
peaks associated with the poles of $\mssize(\freq_2)$ corresponding to
the resonances with the highest $\Tipv_{\n}$; for larger $\T$, more of
such peaks become pronounced, and they grow stronger. 
\Fig{stats}(b) shows the empirical distribution of the
trajectory sizes for the pieces of trajectories of the same $10^5$
simulations, of different lengths. We see that for larger $\T$, the
distribution approaches the theoretical prediction~\eq{distrib-law}.
Finally, \fig{stats}(c) shows the growth of two empirical estimates of
trajectory size with time, in agreement with~\eq{exp-growth}
and~\eq{devn-growth}, including the predicted approximate ratio of 6 between them.

\paragraph{In conclusion,} 
quasiperiodic hypermeander of spiral waves has paradoxical properties. Even
though described by deterministic equations, with no chaos involved, the question of the size
of the tip trajectory does not have a meaningful deterministic answer
and requires probabilistic treatment. In probabilistic terms, although
the tip trajectory is confined with probability one, the expectation
of its size, however measured, is infinite. 
There it is similar to the ``St. Petersburg lottery'', in which a win
is almost certainly finite, but its expectation is
infinite~\cite{Bernoulli-1738,Sommer-1954}.  The realistic price for
a ticket in this lottery is nevertheless finite and modest; the
resolution of this pardox relevant to us is that high wins require
unrealistically long games~\cite[Section X.4]{Feller-1968}.
In our
case, the dependence of the trajectory size,
whether defined via mean square displacement $\abs{\devn_\T}^2$ or
  variance $\mssizesqT\T$,
on any parameter affecting
the frequency ratios becomes more and more irregular as $\T\to\infty$,
and the expectations $\E{\abs{\devn_\T}^2}$ and
  $\E{\mssizesqT\T}$ defined as averages over parameter variations,
  grow linearly in $\T$ even though the individual trajectories are
  bounded.  Note that this is different from the linear growth for the
  mean square displacement of chaotically hypermeandering spirals
  \cite{Biktashev-Holden-1998,Ashwin-etal-2001} which is for averages over initial conditions.

Practical applications of the theory are most evident for re-entrant
waves in cardiac tissue, underlying dangerous cardiac arrhythmias.
However implications may be also expected in any physics where the
theory involves differential equations with quasiperiodic
coefficients.  One example may be provided by evolution of tracers in
quasi-periodic fluid flows~\cite{Boatto-Pierrehumbert-1999}. On a more
speculative level, extension from ODEs in time to PDEs in spatial
variables may provide insights into properties of
quasicrystals~\cite{Janssen-Janner-2015} or quasiperiodic dissipative
structures~\cite{Subramanian-etal-2016}.  
Note that properties of
quasicrystals, among other things, include
superlubricity~\cite{Koren-Duerig-2016} and
superconductivity~\cite{Kamiya-etal-2018}, still awaiting full
theoretical treatment.

\emph{Acknowledgements} %
  VNB was supported by EPSRC Grant
  EP/N014391/1 
  (UK), 
  NSF Grant PHY-1748958, %
  NIH Grant R25GM067110, %
  and the Gordon and Betty Moore Foundation Grant 2919.01 %
  (USA).
IM was supported by European Advanced Grant ERC AdG 320977 (EU).

\onecolumngrid
\cleardoublepage
\appendix
\onecolumngrid
\makeatletter
\begin{centering}
\section{
  Supplementary material for \\ 
  ``\@title'' \\ 
  by V. N. Biktashev and I. Melbourne
  \@author
}
\end{centering}
\makeatother

\subsection{Details of the derivation of the trajectory size asymptotics}

The exact expression for the square of the departure from the initial point is
\[
\abs{\devn_\T(\freqx)}^2 = \abs{ \tipp(\T) - \tipp(0) }^2
 = \abs{ 
   \summ\limits_{\n\in\Zahlen^{\m+1}} \frac{-i\Tipv_{\n}}{\n\cdot\freqx} 
   \left( e^{i\n\cdot\freqx\T}-1\right) 
 }^2 .
\]
Then for its expectation we have
\[
  \E{\abs{\devn_\T}^2} = \frac{1}{\mesBall} \int\limits_\Ball 
  \abs{\devn_\T(\freqx)}^2
  \,\d\freqx
  = \frac{1}{\mesBall} 
  \summ\limits_{\ni,\nii} 
  \Tipv_{\ni}\cconj{\Tipv}_{\nii}
  \termD_{\ni\nii}(\T),
\]
where
\[
  \termD_{\ni\nii}(\T)
  =
  \int\limits_\Ball 
  \frac{e^{-i\ni\cdot\freqx\T}-1}{\ni\cdot\freqx}
  \, 
  \frac{e^{i\nii\cdot\freqx\T}-1}{\nii\cdot\freqx} 
  \,\d\freqx .
\]
Let us investigate the behaviour of the coefficients  $\termD_{\ni\nii}(\T)$ in the limit $\T\to\infty$. 
We have to consider separately the cases when the two zero-denominator hyperplanes cut or do not cut through $\Ball$. 
Recall that
\(
  \sig_\n \equiv \mes{\m}{ \{ \freqx \,|\, \freqx\in\Ball \;\&\;
  \n\cdot\freqx=0 \}}
\), 
and
\(
  \norm{\n} \equiv \left( \sum\limits_{\j=1}^{\m+1} \n_\j^2 \right)^{1/2}
\).
We write $\n'\parallel\n''$ when vectors $n'$ and $n''$ are
  parallel (linearly dependent), and $\n'\nparallel\n''$ otherwise.

\begin{itemize}
\item For $\sig_{\ni}=0$, $\sig_{\nii}=0$, the coefficients are bounded:
\[
  \abs{\termD_{\ni\nii}(\T)} \le
   4\mesBall
   \left(\min\limits_{\freqx\in\Ball}\abs{\ni\cdot\freqx}\right)^{-1}
   \left(\min\limits_{\freqx\in\Ball}\abs{\nii\cdot\freqx}\right)^{-1}
   =\O{1} .
\]
\item For $\sig_{\nii}=0$, $\sig_{\ni}\ne0$, we use a change of
  variables in the space $\left\{\freqx\right\}=\Real^{\m+1}$; namely,
  $\z=\ni\cdot\freqx\T\in\Real$, and $\ziii\in\Real^{\m}$ for the
  unscaled coordinates in $\ni^\perp$. 
  In coordinates $(\z,\ziii)$, 
  the domain $\Ball$ is stretched
  in the $\z$ direction and, as $\T\to\infty$, tends to an infinite cylinder
  with the axis along the $\z$ axis and the base of measure $\sig_\ni$.
  This gives
  \[
    \abs{\termD_{\ni\nii}(\T)} \le
    \left(\min\limits_{\freqx\in\Ball}\abs{\nii\cdot\freqx}\right)^{-1}    
    \int\limits_\Ball \abs{
    \frac{e^{i\ni\cdot\freqx\T}-1}{\ni\cdot\freqx}
    }\,\d\freqx
    = 
    \left(\min\limits_{\freqx\in\Ball}\abs{\nii\cdot\freqx}\right)^{-1}
    \iint\limits_{\freqx\in\Ball} \abs{
      \frac{e^{i\z}-1}{\z/\T}
    }
    \frac{\d\z}{\norm{\ni}\T}
    \,\d\ziii
  \]\[
    =
    \left(\min\limits_{\freqx\in\Ball}\abs{\nii\cdot\freqx}\right)^{-1}
    \frac{\sig_{\ni}}{\norm{\ni}}
      \int\limits_{-\O{\T}}^{\O{\T}} \abs{\frac{e^{i\z}-1}{\z}} \,\d\z
  = \O{\ln(\T)}, 
\]
and similarly for $\sig_{\ni}=0$, $\sig_{\nii}\ne0$.  
\item For $\sig_{\ni}\ne0$, $\sig_{\nii}\ne0$, and $\ni
    \nparallel \nii$, 
we use variables
$\zi=\ni\cdot\freqx\T\in\Real$, 
$\zii=\nii\cdot\freqx\T\in\Real$, 
and $\ziii\in\Real^{\m-1}$ for the unscaled coordinates in
$\Span(\ni,\nii)^\perp$. 
Then 
\[
  \termD_{\ni\nii} (\T)= \iiint\limits_{\freqx\in\Ball} 
  \dfrac{e^{-i\zi}-1}{\zi/\T} 
  \,
  \dfrac{e^{i\zii}-1}{\zii/\T} 
  \,
  \dfrac{\d\zi \d\zii}{\norm{\ni}\norm{\nii}\sin\left(\widehat{\ni,\nii}\right) \T^2}
  \,
  \d\ziii
  =\O{1}. 
\]
Here $\widehat{\ni,\nii}$ is the angle between vectors $\ni$ and
  $\nii$.
\item For $\sig_{\ni}\ne0$, $\sig_{\nii}\ne0$, and $\ni
  \parallel \nii$, we set $\ni=\ai\n$, $\nii=\aii\n$, where
  $\n\in\Zahlen^{\m+1}$ is their GCD vector and
  $\ai,\aii\in\Zahlen\setminus\{0\}$ (see Proposition~\ref{GCD} below).
  In this case the hyperplanes $\ni\cdot\freqx=0$,
  $\nii\cdot\freqx=0$ and $\n\cdot\freqx=0$ coincide, and
  correspondingly $\sig_\ni=\sig_\nii=\sig_\n$. 

We use
$\z=\n\cdot\freqx\T\in\Real$, and $\ziii\in\Real^{\m}$
for the unscaled coordinates in $\n^\perp$.
That gives

\[
  \termD_{\ni\nii} (\T)
  =
  \iint\limits_{\freqx\in\Ball} 
  \frac{\left(e^{i\ai\z}-1\right) \left(e^{-i\aii\z}-1\right)}{\ai\aii (\z/\T)^2} 
  \,\dfrac{\d\z \d\ziii}{\norm{\n}\T}
  =
  \frac{\T\sig_\n}{\ai\aii\norm{\n}}
  \int\limits_{-\O{\T}}^{\O{\T}}
  \left[
    e^{i(\ai-\aii)\z}
    -e^{i\ai\z}
    -e^{-i\aii\z}
    +1 
  \right]
  \frac{\d\z}{\z^2} .
\]
Now,
\begin{equation}\label{Dint}
  \intinf
  \left[
    e^{i(\ai-\aii)\z}
    -e^{i\ai\z}
    -e^{-i\aii\z}
    +1 
  \right]
  \frac{\d\z}{\z^2}
  =
  \I(\ai) + \I(\aii) - \I(\ai-\aii) ,
\end{equation}
where
\begin{equation}\label{Iint}
  \I(\alpha)=\intinf (1-\cos(\alpha\z)) \,\frac{\d{\z}}{\z^2}
  = \pi \abs{\alpha} ,
\end{equation}
and therefore
\[
  \termD_{\ni\nii} (\T)
  \approx
  \frac{ \pi \sig_\n \left(\abs{\ai} + \abs{\aii} - \abs{\ai-\aii}\right) }{\norm{\n} \ai\aii} \T
  \quad
  \textrm{as}
  \quad
  \T\to\infty.
\]
For the principal resonances
$\ai=\aii=1$ we have
\[
  \termD_{\n\n} (\T)
  \approx
  2\pi
  \frac{\sig_\n}{\norm{\n}} \T
  \quad
  \textrm{as}
  \quad
  \T\to\infty ,
\]
giving the estimate \eq{devn-growth}.
\end{itemize}

The computations for other statistics are similar in technique,
if slightly longer. 
The raw second moment, i.e. the expectation of the 
time-average of the square departure from initial point, is 
\[
  \E{\mspathsqT\T} = 
  \frac{1}{\T\mesBall}\int\limits_{\Ball} \int\limits_{0}^{\T}
  \abs{\devn_\t(\freqx)}^2 \,\d\t\,\d\freqx
  = 
  \frac{1}{\mesBall} 
  \summ\limits_{\ni,\nii\in\Zahlen^{\m+1}}
  \Tipv_{\ni}\cconj\Tipv_{\nii}
  \termP_{\ni,\nii}(\T),
\]
where, for $\ni\nparallel\nii$, 
\[
  \termP_{\ni\nii}(\T)
  =
  \int\limits_\Ball 
  \frac{1}{(\ni\cdot\freqx)(\nii\cdot\freqx)}
  \left[
    \frac{e^{i(\ni-\nii)\cdot\freqx \T}-1}{i(\ni-\nii)\cdot\freqx \T}
    -\frac{e^{i\ni\cdot\freqx\T}-1}{i\ni\cdot\freqx \T}
    -\frac{e^{-i\nii\cdot\freqx\T}-1}{-i\nii\cdot\freqx \T}
    +1
  \right]
  \,
  \,\d\freqx ,
\]

and for $\ni\parallel\nii$, $\ni/\ai=\nii/\aii=\n$, 

\[
  \termP_{\ni\nii}(\T)
    = \frac{1}{\ai\aii\T}
    \left[ \I(\ai)+\I(-\aii)-\I(\ai-\aii) \right]
\]
where
\[
   \I(\al) = \int\limits_{\Ball}
    \frac{
      1 + i\al\n\cdot\freqx\T - \left( \al\n\cdot\freqx\T \right)^2/2
      - e^{i\al\n\cdot\freqx\T}
    }{
      i \al(\n\cdot\freqx)^3
    }
  \,\d\freqx . 
\]

Reasoning as in the previous case, we conclude that 
all the terms are
$\O{\ln(\T)}$
as $\T\to\infty$, except for 
those with $\ni\parallel\nii$,
$\sig_\n\ne0$, which grow as $\O{\T}$. 
Using, as before, the variables
$\z=\n\cdot\freqx\T\in\Real$ and $\ziii\in\Real^{\m}\sim\n^\perp$,
we get

\[
  \I(\al) 
  =
  \frac{\sig_\n\T^2\abs{\al}}{\norm{\n}} \int\limits_{-\O{\T}}^{\O{\T}}
    \frac{
      1 + i\al\z - \left( \al\z \right)^2/2 - e^{i\al\z}
    }{
      i\z^3
    }
    \, \d\z
  \approx
  \frac{\sig_\n\T^2\abs{\al}}{\norm{\n}} \intinf
  \frac{\z - \sin(\z)}{\z^3}
  \, \d\z
  = \frac{\sig_\n\T^2\abs{\al}}{\norm{\n}} \frac{\pi}{2},
\]
so
\[
  \termP_{\ni\nii} \approx
  \frac{\pi\sig_\n( \abs{\ai}+\abs{\aii}-\abs{\ai-\aii})}{2\norm{\n}\ai\aii} 
  \T
  \quad
  \textrm{as}
  \quad
  \T\to\infty.
\]
For the principal resonances, $\ai=\aii=1$, this simplifies to
\[
  \termP_{\n\n} \approx
  \pi \frac{\sig_\n}{\norm{\n}} 
  \T
  \quad
  \textrm{as}
  \quad
  \T\to\infty.
\]

The expectation of the square of the time-averaged departure from
initial point, i.e. of the length of the position vector of the apparent centroid in
time $\T$,  is
\[
  \E{ \abs{\meansizeT{\T}}^2 }
  =
  \frac{1}{\mesBall} \int\limits_\Ball \abs{
    \frac1\T \int\limits_0^\T \devn_\T(\freqx) \,\d\T
  }^2 \,\d\freqx
  = 
  \frac{1}{\mesBall} 
  \summ\limits_{\ni,\nii\in\Zahlen^{\m+1}}
  \Tipv_{\ni}\cconj\Tipv_{\nii}
  \termM_{\ni,\nii}(\T),
\]
%
where
\[
  \termM_{\ni\nii}(\T)=
    \int\limits_{\Ball}
    \frac{
      \left(e^{i\ni\cdot\freqx\T}-1-i \ni\cdot\freqx\T\right)
      \left(e^{-i\nii\cdot\freqx\T}-1+i \nii\cdot\freqx\T\right)
    }{
      (\ni\cdot\freqx)^2(\nii\cdot\freqx)^2\T^2
    }
    \,\d\freqx .
\]
As before, important terms are those with $\ni/\ai=\nii/\aii=\n$, 
$\sig_\n\ne0$,
for which we
use $\z=\n\cdot\freqx\T$, $\ziii\in\Real^{\m}\sim\n^\perp$, and
get

\[
  \termM_{\ni\nii}(\T) \approx 
  \frac{\sig_\n\T}{\norm{\n}\ai^2\aii^2}
  \termM(\ai,\aii)
  \quad
  \textrm{as}
  \quad
  \T\to\infty,
\]
where the integral
\[
  \termM(\ai,\aii)
  = 
  \intinf
      \left(e^{i\ai\z}-1-i \ai\z\right)
      \left(e^{-i\aii\z}-1+i \aii\z\right)
    \,\frac{\d\z}{\z^4}
\]
can be calculated using differentiation by parameters. We have
\[
  \dff{\termM}{\ai}{\aii}
  =
  \intinf
      \left(e^{i\ai\z}-1\right)
      \left(e^{-i\aii\z}-1\right)
    \,\frac{\d\z}{\z^2}
    =
    \pi\left( \abs{\ai}+\abs{\aii}-\abs{\ai-\aii} \right),
\]
using the result \eqtwo(Dint,Iint) obtained above. 
Hence,
\[
  \termM(\ai,\aii)
  =
  \pi \iint \left( \abs{\ai}+\abs{\aii}-\abs{\ai-\aii} \right) \,\d\ai\d\aii
\]\[
  = \pi\left(
    \frac12\aii\ai\abs{\ai}
    +
    \frac12\ai\aii\abs{\aii}
    +
    \frac16(\ai-\aii)^2\abs{\ai-\aii}
    \right)
    + \arbF(\ai) + \arbG(\aii),
\]
where functions $\arbF$ and $\arbG$ can be determined from boundary conditions.
Consider
\[
  \termM(\ai,0)=0
  =
    \frac{\pi}{6}\ai^2\abs{\ai}
    + \arbF(\ai) + \arbG(0),
\]\[
  \termM(0,\aii)=0
  = 
  \dfrac{\pi}{6}\aii^2\abs{\aii}
  + \arbF(0) + \arbG(\aii) .
\]
We observe that $\arbF(0)=\arbG(0)=0$ is an admissible choice, which leads to
\[
  \termM(\ai,\aii)
  = \frac{\pi}{6}\left(
    (3\aii-\ai)\ai\abs{\ai}
    +
    (3\ai-\aii)\aii\abs{\aii}
    +
    (\ai-\aii)^2\abs{\ai-\aii}
    \right)
\]
and consequently
\[
  \termM_{\ni\nii}(\T)\approx
  \frac16 \pi\sig_\n
  \frac{(3\aii-\ai)\ai\abs{\ai}
    +
    (3\ai-\aii)\aii\abs{\aii}
    +
    (\ai-\aii)^2\abs{\ai-\aii}
  }{\norm{\n}\ai^2\aii^2}
  \T
  \quad\textrm{as}\quad\T\to\infty.
\]
For the principal resonances, $\ai=\aii=1$, this gives
\[
  \termM_{\ni\nii}(\T) \approx
  \frac{2\pi}{3}
  \,
  \frac{\sig_\n}{\norm\n}
  \T
  \quad\textrm{as}\quad\T\to\infty.
\]
Hence the central second moment, i.e. the expectation of the 
time-average of the square departure from the apparent centroid is
\[
  \E{\mssizesqT\T}
  =
  \frac{1}{\mesBall} \int\limits_\Ball \abs{
    \frac1\T \int\limits_0^\T \left(
      \devn_\T(\freqx) 
      -
      \meansizeT{\T}(\freqx)
    \right)\,\d\T
  }^2 \,\d\freqx
  =
  \E{\mspathsqT\T   - \abs{\meansizeT{\T}}^2}
  =
  \frac{1}{\mesBall} 
  \summ\limits_{\ni,\nii\in\Zahlen^{\m+1}}
  \Tipv_{\ni}\cconj\Tipv_{\nii}
  \termS_{\ni,\nii}(\T),
\]
where for the principal resonances we have 
\[
  \termS_{\n\n}(\T)=
  \termP_{\n\n}(\T)-
  \termM_{\n\n}(\T)
  \approx
  \frac{\pi}{3}
  \,
  \frac{\sig_\n}{\norm{\n}}
  \T ,
\]
which gives the estimate~\eq{exp-growth}.

\begin{proposition}\label{GCD}
  Let $\ni,\nii\in\Zahlen^\m\setminus\{0\}$ be linearly
  dependent. Then there exist $\ai,\aii\in\Zahlen\setminus\{0\}$ and
  $\n\in\Zahlen^\m\setminus\{0\}$ such that $\ai$ and $\aii$ are coprime and $\ni=\ai\n$, $\nii=\aii\n$.
\end{proposition}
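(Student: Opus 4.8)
The plan is to extract the ratio between the two vectors, observe that it is forced to be rational, and then reduce it to lowest terms, letting the reduced denominator and numerator play the roles of $\ai$ and $\aii$. First I would convert linear dependence into a scalar relation: since $\ni$ and $\nii$ are both nonzero, linear dependence forces $\nii=\param\ni$ for some $\param\in\Real\setminus\{0\}$. Choosing any coordinate $i$ with $\ni_i\neq0$ gives $\param=\nii_i/\ni_i$, so $\param\in\Quoziente$ automatically; the ratio of two parallel integer vectors is always rational.

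Next I would write $\param=\aii/\ai$ in lowest terms, with $\ai\ge1$ and $\gcd(\ai,\abs{\aii})=1$, noting that $\aii\neq0$ because $\nii\neq0$. Clearing denominators yields the integer-vector identity $\ai\nii=\aii\ni$. The decisive step is a divisibility observation: reading this identity entrywise, $\ai\mid\aii\ni_i$ for every coordinate $i$, and since $\gcd(\ai,\aii)=1$, Euclid's lemma gives $\ai\mid\ni_i$. Consequently $\n:=\ni/\ai$ is a genuine element of $\Zahlen^\m$, and $\n\neq0$ because $\ni\neq0$.

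Finally I would verify the conclusion against this choice: $\ni=\ai\n$ holds by construction, while $\nii=\param\ni=(\aii/\ai)(\ai\n)=\aii\n$. Both multipliers are nonzero integers, and they are coprime by the lowest-terms normalization, so $\ai$, $\aii$ and $\n$ have exactly the required properties. The only substantive point in the whole argument is the divisibility step, which is precisely where coprimality of the reduced numerator and denominator is used through Euclid's lemma; everything else is bookkeeping. The lone subtlety is the sign convention, which is handled by taking $\ai>0$ and absorbing the sign of $\param$ into $\aii$, leaving both coprimality and nonvanishing intact.
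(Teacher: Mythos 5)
Your proposal is correct and follows essentially the same route as the paper: write the ratio of the two parallel integer vectors as a reduced fraction $\aii/\ai$, clear denominators to get $\ai\nii=\aii\ni$, and use coprimality to conclude that $\ai$ divides every component of $\ni$, so that $\n=\ni/\ai$ is an integer vector. You merely make explicit (via Euclid's lemma) the divisibility step that the paper states without elaboration, and you recover $\nii=\aii\n$ algebraically rather than by the symmetric divisibility argument; these are cosmetic differences only.
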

\begin{proof}
  Since both
  vectors are nonzero, we have $\nii=\al\ni$ for a nonzero scalar
  $\al$. We must have $\al\in\Quoziente\setminus\{0\}$ since it is a ratio of
  the corresponding components of $\ni$ and $\nii$. Let $\al=\aii/\ai$
  with $\ai,\aii\in\Zahlen\setminus\{0\}$ coprime. By writing $\ai\nii=\aii\ni$ we observe
  that all components of $\nii$ are divisible by $\aii$ and all
  components of $\ni$ are divisible by $\ai$. Hence
  $\nii/\aii=\ni/\ai=\n\in\Zahlen^\m\setminus\{0\}$, as required. 
\end{proof}

\begin{thebibliography}{48}%
\makeatletter
\providecommand \@ifxundefined [1]{%
 \@ifx{#1\undefined}
}%
\providecommand \@ifnum [1]{%
 \ifnum #1\expandafter \@firstoftwo
 \else \expandafter \@secondoftwo
 \fi
}%
\providecommand \@ifx [1]{%
 \ifx #1\expandafter \@firstoftwo
 \else \expandafter \@secondoftwo
 \fi
}%
\providecommand \natexlab [1]{#1}%
\providecommand \enquote  [1]{``#1''}%
\providecommand \bibnamefont  [1]{#1}%
\providecommand \bibfnamefont [1]{#1}%
\providecommand \citenamefont [1]{#1}%
\providecommand \href@noop [0]{\@secondoftwo}%
\providecommand \href [0]{\begingroup \@sanitize@url \@href}%
\providecommand \@href[1]{\@@startlink{#1}\@@href}%
\providecommand \@@href[1]{\endgroup#1\@@endlink}%
\providecommand \@sanitize@url [0]{\catcode `\\12\catcode `\$12\catcode
  `\&12\catcode `\#12\catcode `\^12\catcode `\_12\catcode `\%12\relax}%
\providecommand \@@startlink[1]{}%
\providecommand \@@endlink[0]{}%
\providecommand \url  [0]{\begingroup\@sanitize@url \@url }%
\providecommand \@url [1]{\endgroup\@href {#1}{\urlprefix }}%
\providecommand \urlprefix  [0]{URL }%
\providecommand \Eprint [0]{\href }%
\providecommand \doibase [0]{http://dx.doi.org/}%
\providecommand \selectlanguage [0]{\@gobble}%
\providecommand \bibinfo  [0]{\@secondoftwo}%
\providecommand \bibfield  [0]{\@secondoftwo}%
\providecommand \translation [1]{[#1]}%
\providecommand \BibitemOpen [0]{}%
\providecommand \bibitemStop [0]{}%
\providecommand \bibitemNoStop [0]{.\EOS\space}%
\providecommand \EOS [0]{\spacefactor3000\relax}%
\providecommand \BibitemShut  [1]{\csname bibitem#1\endcsname}%
\let\auto@bib@innerbib\@empty
%
\bibitem [{\citenamefont {Zhabotinsky}\ and\ \citenamefont
  {Zaikin}(1971)}]{Zhabotinsky-Zaikin-1971}%
  \BibitemOpen
  \bibfield  {author} {\bibinfo {author} {\bibfnamefont {A.~M.}\ \bibnamefont
  {Zhabotinsky}}\ and\ \bibinfo {author} {\bibfnamefont {A.~N.}\ \bibnamefont
  {Zaikin}},\ }in\ \href@noop {} {\emph {\bibinfo {booktitle} {Oscillatory
  processes in biological and chemical systems II}}},\ \bibinfo {editor}
  {edited by\ \bibinfo {editor} {\bibfnamefont {E.~E.}\ \bibnamefont
  {Sel'kov}}, \bibinfo {editor} {\bibfnamefont {A.~M.}\ \bibnamefont
  {Zhabotinsky}}, \ and\ \bibinfo {editor} {\bibfnamefont {S.~E.}\ \bibnamefont
  {Shnoll}}}\ (\bibinfo  {publisher} {USSR Acad. Sci.},\ \bibinfo {address}
  {Puschino on Oka},\ \bibinfo {year} {1971})\ pp.\ \bibinfo {pages}
  {279--283},\ \bibinfo {note} {in Russian}\BibitemShut {NoStop}%
\bibitem [{\citenamefont {Allessie}\ \emph {et~al.}(1973)\citenamefont
  {Allessie}, \citenamefont {Bonke},\ and\ \citenamefont
  {Schopman}}]{Allessie-etal-1973}%
  \BibitemOpen
  \bibfield  {author} {\bibinfo {author} {\bibfnamefont {M.~A.}\ \bibnamefont
  {Allessie}}, \bibinfo {author} {\bibfnamefont {F.~I.~M.}\ \bibnamefont
  {Bonke}}, \ and\ \bibinfo {author} {\bibfnamefont {F.~J.~G.}\ \bibnamefont
  {Schopman}},\ }\href@noop {} {\bibfield  {journal} {\bibinfo  {journal}
  {Circ. Res.}\ }\textbf {\bibinfo {volume} {33}},\ \bibinfo {pages} {54}
  (\bibinfo {year} {1973})}\BibitemShut {NoStop}%
\bibitem [{\citenamefont {Alcantara}\ and\ \citenamefont
  {Monk}(1974)}]{Alcantara-Monk-1974}%
  \BibitemOpen
  \bibfield  {author} {\bibinfo {author} {\bibfnamefont {F.}~\bibnamefont
  {Alcantara}}\ and\ \bibinfo {author} {\bibfnamefont {M.}~\bibnamefont
  {Monk}},\ }\href@noop {} {\bibfield  {journal} {\bibinfo  {journal} {J. Gen.
  Microbiol.}\ }\textbf {\bibinfo {volume} {85}},\ \bibinfo {pages} {321}
  (\bibinfo {year} {1974})}\BibitemShut {NoStop}%
\bibitem [{\citenamefont {Carey}\ \emph {et~al.}(1978)\citenamefont {Carey},
  \citenamefont {Giles}, \citenamefont {Jr.},\ and\ \citenamefont
  {Mclean}}]{Carey-etal-1978}%
  \BibitemOpen
  \bibfield  {author} {\bibinfo {author} {\bibfnamefont {A.~B.}\ \bibnamefont
  {Carey}}, \bibinfo {author} {\bibfnamefont {R.~H.}\ \bibnamefont {Giles}},
  \bibinfo {author} {\bibnamefont {Jr.}}, \ and\ \bibinfo {author}
  {\bibfnamefont {R.~G.}\ \bibnamefont {Mclean}},\ }\href@noop {} {\bibfield
  {journal} {\bibinfo  {journal} {Am. J. Trop. Med. Hyg.}\ }\textbf {\bibinfo
  {volume} {27}},\ \bibinfo {pages} {573} (\bibinfo {year} {1978})}\BibitemShut
  {NoStop}%
\bibitem [{\citenamefont {Gorelova}\ and\ \citenamefont
  {Bures}(1983)}]{Gorelova-Bures-1983}%
  \BibitemOpen
  \bibfield  {author} {\bibinfo {author} {\bibfnamefont {N.~A.}\ \bibnamefont
  {Gorelova}}\ and\ \bibinfo {author} {\bibfnamefont {J.}~\bibnamefont
  {Bures}},\ }\href@noop {} {\bibfield  {journal} {\bibinfo  {journal} {J.
  Neurobiol.}\ }\textbf {\bibinfo {volume} {14}},\ \bibinfo {pages} {353}
  (\bibinfo {year} {1983})}\BibitemShut {NoStop}%
\bibitem [{\citenamefont {Murray}\ \emph {et~al.}(1986)\citenamefont {Murray},
  \citenamefont {Stanley},\ and\ \citenamefont {Brown}}]{Murray-etal-1986}%
  \BibitemOpen
  \bibfield  {author} {\bibinfo {author} {\bibfnamefont {J.~D.}\ \bibnamefont
  {Murray}}, \bibinfo {author} {\bibfnamefont {E.~A.}\ \bibnamefont {Stanley}},
  \ and\ \bibinfo {author} {\bibfnamefont {D.~L.}\ \bibnamefont {Brown}},\
  }\href@noop {} {\bibfield  {journal} {\bibinfo  {journal} {Proc. Roy. Soc.
  Lond. ser. B}\ }\textbf {\bibinfo {volume} {229}},\ \bibinfo {pages} {111}
  (\bibinfo {year} {1986})}\BibitemShut {NoStop}%
\bibitem [{\citenamefont {Schulman}\ and\ \citenamefont
  {Seiden}(1986)}]{Schulman-Seiden-1986}%
  \BibitemOpen
  \bibfield  {author} {\bibinfo {author} {\bibfnamefont {L.~S.}\ \bibnamefont
  {Schulman}}\ and\ \bibinfo {author} {\bibfnamefont {P.~E.}\ \bibnamefont
  {Seiden}},\ }\href@noop {} {\bibfield  {journal} {\bibinfo  {journal}
  {Science}\ }\textbf {\bibinfo {volume} {233}},\ \bibinfo {pages} {425}
  (\bibinfo {year} {1986})}\BibitemShut {NoStop}%
\bibitem [{\citenamefont {Madore}\ and\ \citenamefont
  {Freedman}(1987)}]{Madore-Freedman-1987}%
  \BibitemOpen
  \bibfield  {author} {\bibinfo {author} {\bibfnamefont {B.~F.}\ \bibnamefont
  {Madore}}\ and\ \bibinfo {author} {\bibfnamefont {W.~L.}\ \bibnamefont
  {Freedman}},\ }\href@noop {} {\bibfield  {journal} {\bibinfo  {journal} {Am.
  Sci.}\ }\textbf {\bibinfo {volume} {75}},\ \bibinfo {pages} {252} (\bibinfo
  {year} {1987})}\BibitemShut {NoStop}%
\bibitem [{\citenamefont {Jakubith}\ \emph {et~al.}(1990)\citenamefont
  {Jakubith}, \citenamefont {Rotermund}, \citenamefont {Engel}, \citenamefont
  {von Oertzen},\ and\ \citenamefont {Ertl}}]{Jakubith-etal-1990}%
  \BibitemOpen
  \bibfield  {author} {\bibinfo {author} {\bibfnamefont {S.}~\bibnamefont
  {Jakubith}}, \bibinfo {author} {\bibfnamefont {H.~H.}\ \bibnamefont
  {Rotermund}}, \bibinfo {author} {\bibfnamefont {W.}~\bibnamefont {Engel}},
  \bibinfo {author} {\bibfnamefont {A.}~\bibnamefont {von Oertzen}}, \ and\
  \bibinfo {author} {\bibfnamefont {G.}~\bibnamefont {Ertl}},\ }\href@noop {}
  {\bibfield  {journal} {\bibinfo  {journal} {Phys. Rev. Lett.}\ }\textbf
  {\bibinfo {volume} {65}},\ \bibinfo {pages} {3013} (\bibinfo {year}
  {1990})}\BibitemShut {NoStop}%
\bibitem [{\citenamefont {Lechleiter}\ \emph {et~al.}(1991)\citenamefont
  {Lechleiter}, \citenamefont {Girard}, \citenamefont {Peralta},\ and\
  \citenamefont {Clapham}}]{Lechleiter-etal-1991}%
  \BibitemOpen
  \bibfield  {author} {\bibinfo {author} {\bibfnamefont {J.}~\bibnamefont
  {Lechleiter}}, \bibinfo {author} {\bibfnamefont {S.}~\bibnamefont {Girard}},
  \bibinfo {author} {\bibfnamefont {E.}~\bibnamefont {Peralta}}, \ and\
  \bibinfo {author} {\bibfnamefont {D.}~\bibnamefont {Clapham}},\ }\href@noop
  {} {\bibfield  {journal} {\bibinfo  {journal} {Science}\ }\textbf {\bibinfo
  {volume} {252}} (\bibinfo {year} {1991})}\BibitemShut {NoStop}%
\bibitem [{\citenamefont {Frisch}\ \emph {et~al.}(1994)\citenamefont {Frisch},
  \citenamefont {Rica}, \citenamefont {Coullet},\ and\ \citenamefont
  {Gilli}}]{Frisch-etal-1994}%
  \BibitemOpen
  \bibfield  {author} {\bibinfo {author} {\bibfnamefont {T.}~\bibnamefont
  {Frisch}}, \bibinfo {author} {\bibfnamefont {S.}~\bibnamefont {Rica}},
  \bibinfo {author} {\bibfnamefont {P.}~\bibnamefont {Coullet}}, \ and\
  \bibinfo {author} {\bibfnamefont {J.~M.}\ \bibnamefont {Gilli}},\ }\href@noop
  {} {\bibfield  {journal} {\bibinfo  {journal} {Phys. Rev. Lett.}\ }\textbf
  {\bibinfo {volume} {72}},\ \bibinfo {pages} {1471} (\bibinfo {year}
  {1994})}\BibitemShut {NoStop}%
\bibitem [{\citenamefont {Yu}\ \emph {et~al.}(1999)\citenamefont {Yu},
  \citenamefont {Lu},\ and\ \citenamefont {Harrison}}]{Yu-etal-1999}%
  \BibitemOpen
  \bibfield  {author} {\bibinfo {author} {\bibfnamefont {D.~J.}\ \bibnamefont
  {Yu}}, \bibinfo {author} {\bibfnamefont {W.~P.}\ \bibnamefont {Lu}}, \ and\
  \bibinfo {author} {\bibfnamefont {R.~G.}\ \bibnamefont {Harrison}},\
  }\href@noop {} {\bibfield  {journal} {\bibinfo  {journal} {Journal of Optics
  B --- Quantum and Semiclassical Optics}\ }\textbf {\bibinfo {volume} {1}},\
  \bibinfo {pages} {25} (\bibinfo {year} {1999})}\BibitemShut {NoStop}%
\bibitem [{\citenamefont {Agladze}\ and\ \citenamefont
  {Steinbock}(2000)}]{Agladze-Steinbock-2000}%
  \BibitemOpen
  \bibfield  {author} {\bibinfo {author} {\bibfnamefont {K.}~\bibnamefont
  {Agladze}}\ and\ \bibinfo {author} {\bibfnamefont {O.}~\bibnamefont
  {Steinbock}},\ }\href@noop {} {\bibfield  {journal} {\bibinfo  {journal}
  {J.Phys.Chem. A}\ }\textbf {\bibinfo {volume} {104 (44)}},\ \bibinfo {pages}
  {9816} (\bibinfo {year} {2000})}\BibitemShut {NoStop}%
\bibitem [{\citenamefont {Kastberger}\ \emph {et~al.}(2008)\citenamefont
  {Kastberger}, \citenamefont {Schmelzer},\ and\ \citenamefont
  {Kranner}}]{Kastberger-etal-2008}%
  \BibitemOpen
  \bibfield  {author} {\bibinfo {author} {\bibfnamefont {G.}~\bibnamefont
  {Kastberger}}, \bibinfo {author} {\bibfnamefont {E.}~\bibnamefont
  {Schmelzer}}, \ and\ \bibinfo {author} {\bibfnamefont {I.}~\bibnamefont
  {Kranner}},\ }\href@noop {} {\bibfield  {journal} {\bibinfo  {journal} {PLoS
  ONE}\ }\textbf {\bibinfo {volume} {3}},\ \bibinfo {pages} {e3141} (\bibinfo
  {year} {2008})}\BibitemShut {NoStop}%
\bibitem [{\citenamefont {Alonso}\ \emph {et~al.}(2016)\citenamefont {Alonso},
  \citenamefont {B{\"a}r},\ and\ \citenamefont
  {Echebarria}}]{Alonso-etal-2016-PR}%
  \BibitemOpen
  \bibfield  {author} {\bibinfo {author} {\bibfnamefont {S.}~\bibnamefont
  {Alonso}}, \bibinfo {author} {\bibfnamefont {M.}~\bibnamefont {B{\"a}r}}, \
  and\ \bibinfo {author} {\bibfnamefont {B.}~\bibnamefont {Echebarria}},\
  }\href {\doibase 10.1088/0034-4885/79/9/096601} {\bibfield  {journal}
  {\bibinfo  {journal} {Rep. Prog. Phys.}\ }\textbf {\bibinfo {volume} {79}},\
  \bibinfo {pages} {096601} (\bibinfo {year} {2016})}\BibitemShut {NoStop}%
\bibitem [{\citenamefont {Winfree}(1973)}]{Winfree-1973}%
  \BibitemOpen
  \bibfield  {author} {\bibinfo {author} {\bibfnamefont {A.~T.}\ \bibnamefont
  {Winfree}},\ }\href@noop {} {\bibfield  {journal} {\bibinfo  {journal}
  {Science}\ }\textbf {\bibinfo {volume} {181}},\ \bibinfo {pages} {937}
  (\bibinfo {year} {1973})}\BibitemShut {NoStop}%
\bibitem [{\citenamefont {R{\"o}ssler}\ and\ \citenamefont
  {Kahlert}(1979)}]{Roessler-Kahlert-1979}%
  \BibitemOpen
  \bibfield  {author} {\bibinfo {author} {\bibfnamefont {O.~E.}\ \bibnamefont
  {R{\"o}ssler}}\ and\ \bibinfo {author} {\bibfnamefont {C.}~\bibnamefont
  {Kahlert}},\ }\href@noop {} {\bibfield  {journal} {\bibinfo  {journal} {Z.
  Naturforsch.}\ }\textbf {\bibinfo {volume} {34a}},\ \bibinfo {pages} {565}
  (\bibinfo {year} {1979})}\BibitemShut {NoStop}%
\bibitem [{\citenamefont {Zykov}(1986)}]{Zykov-1986}%
  \BibitemOpen
  \bibfield  {author} {\bibinfo {author} {\bibfnamefont {V.~S.}\ \bibnamefont
  {Zykov}},\ }\href@noop {} {\bibfield  {journal} {\bibinfo  {journal}
  {Biofizika}\ }\textbf {\bibinfo {volume} {31}},\ \bibinfo {pages} {862}
  (\bibinfo {year} {1986})}\BibitemShut {NoStop}%
\bibitem [{\citenamefont {Winfree}(1991)}]{Winfree-1991}%
  \BibitemOpen
  \bibfield  {author} {\bibinfo {author} {\bibfnamefont {A.~T.}\ \bibnamefont
  {Winfree}},\ }\href@noop {} {\bibfield  {journal} {\bibinfo  {journal}
  {Chaos}\ }\textbf {\bibinfo {volume} {1}},\ \bibinfo {pages} {303} (\bibinfo
  {year} {1991})}\BibitemShut {NoStop}%
\bibitem [{\citenamefont {Biktashev}\ and\ \citenamefont
  {Holden}(1996)}]{Biktashev-Holden-1996}%
  \BibitemOpen
  \bibfield  {author} {\bibinfo {author} {\bibfnamefont {V.~N.}\ \bibnamefont
  {Biktashev}}\ and\ \bibinfo {author} {\bibfnamefont {A.~V.}\ \bibnamefont
  {Holden}},\ }\href@noop {} {\bibfield  {journal} {\bibinfo  {journal} {Proc.
  Roy. Soc. Lond. ser. B}\ }\textbf {\bibinfo {volume} {263}},\ \bibinfo
  {pages} {1373} (\bibinfo {year} {1996})}\BibitemShut {NoStop}%
\bibitem [{\citenamefont {Biktashev}\ and\ \citenamefont
  {Holden}(1998{\natexlab{a}})}]{Biktashev-Holden-1998a}%
  \BibitemOpen
  \bibfield  {author} {\bibinfo {author} {\bibfnamefont {V.~N.}\ \bibnamefont
  {Biktashev}}\ and\ \bibinfo {author} {\bibfnamefont {A.~V.}\ \bibnamefont
  {Holden}},\ }\href@noop {} {\bibfield  {journal} {\bibinfo  {journal} {J.
  Physiol.}\ }\textbf {\bibinfo {volume} {509P}},\ \bibinfo {pages} {P139}
  (\bibinfo {year} {1998}{\natexlab{a}})}\BibitemShut {NoStop}%
\bibitem [{\citenamefont {Barkley}(1994)}]{Barkley-1994}%
  \BibitemOpen
  \bibfield  {author} {\bibinfo {author} {\bibfnamefont {D.}~\bibnamefont
  {Barkley}},\ }\href@noop {} {\bibfield  {journal} {\bibinfo  {journal} {Phys.
  Rev. Lett.}\ }\textbf {\bibinfo {volume} {72}},\ \bibinfo {pages} {164}
  (\bibinfo {year} {1994})}\BibitemShut {NoStop}%
\bibitem [{\citenamefont {Fiedler}\ \emph {et~al.}(1996)\citenamefont
  {Fiedler}, \citenamefont {Sandstede}, \citenamefont {Scheel},\ and\
  \citenamefont {Wulff}}]{Fiedler-etal-1996}%
  \BibitemOpen
  \bibfield  {author} {\bibinfo {author} {\bibfnamefont {B.}~\bibnamefont
  {Fiedler}}, \bibinfo {author} {\bibfnamefont {B.}~\bibnamefont {Sandstede}},
  \bibinfo {author} {\bibfnamefont {A.}~\bibnamefont {Scheel}}, \ and\ \bibinfo
  {author} {\bibfnamefont {C.}~\bibnamefont {Wulff}},\ }\href@noop {}
  {\bibfield  {journal} {\bibinfo  {journal} {Doc. Math. J. DMV}\ }\textbf
  {\bibinfo {volume} {1}},\ \bibinfo {pages} {479} (\bibinfo {year}
  {1996})}\BibitemShut {NoStop}%
\bibitem [{\citenamefont {Biktashev}\ \emph {et~al.}(1996)\citenamefont
  {Biktashev}, \citenamefont {Holden},\ and\ \citenamefont
  {Nikolaev}}]{Biktashev-etal-1996}%
  \BibitemOpen
  \bibfield  {author} {\bibinfo {author} {\bibfnamefont {V.~N.}\ \bibnamefont
  {Biktashev}}, \bibinfo {author} {\bibfnamefont {A.~V.}\ \bibnamefont
  {Holden}}, \ and\ \bibinfo {author} {\bibfnamefont {E.~V.}\ \bibnamefont
  {Nikolaev}},\ }\href@noop {} {\bibfield  {journal} {\bibinfo  {journal} {Int.
  J. of Bifurcation and Chaos}\ }\textbf {\bibinfo {volume} {6}},\ \bibinfo
  {pages} {2433} (\bibinfo {year} {1996})}\BibitemShut {NoStop}%
\bibitem [{\citenamefont {Sandstede}\ \emph {et~al.}(1997)\citenamefont
  {Sandstede}, \citenamefont {Scheel},\ and\ \citenamefont
  {Wulff}}]{Sandstede-etal-1997}%
  \BibitemOpen
  \bibfield  {author} {\bibinfo {author} {\bibfnamefont {B.}~\bibnamefont
  {Sandstede}}, \bibinfo {author} {\bibfnamefont {A.}~\bibnamefont {Scheel}}, \
  and\ \bibinfo {author} {\bibfnamefont {C.}~\bibnamefont {Wulff}},\
  }\href@noop {} {\bibfield  {journal} {\bibinfo  {journal} {J. Differential
  Equations}\ }\textbf {\bibinfo {volume} {141}},\ \bibinfo {pages} {122}
  (\bibinfo {year} {1997})}\BibitemShut {NoStop}%
\bibitem [{\citenamefont {Biktashev}\ and\ \citenamefont
  {Holden}(1998{\natexlab{b}})}]{Biktashev-Holden-1998}%
  \BibitemOpen
  \bibfield  {author} {\bibinfo {author} {\bibfnamefont {V.~N.}\ \bibnamefont
  {Biktashev}}\ and\ \bibinfo {author} {\bibfnamefont {A.~V.}\ \bibnamefont
  {Holden}},\ }\href@noop {} {\bibfield  {journal} {\bibinfo  {journal}
  {Physica D}\ }\textbf {\bibinfo {volume} {116}},\ \bibinfo {pages} {342}
  (\bibinfo {year} {1998}{\natexlab{b}})}\BibitemShut {NoStop}%
\bibitem [{\citenamefont {Golubitsky}\ \emph {et~al.}(2000)\citenamefont
  {Golubitsky}, \citenamefont {{LeBlanc}},\ and\ \citenamefont
  {Melbourne}}]{Golubitsky-etal-2000}%
  \BibitemOpen
  \bibfield  {author} {\bibinfo {author} {\bibfnamefont {M.}~\bibnamefont
  {Golubitsky}}, \bibinfo {author} {\bibfnamefont {V.~G.}\ \bibnamefont
  {{LeBlanc}}}, \ and\ \bibinfo {author} {\bibfnamefont {I.}~\bibnamefont
  {Melbourne}},\ }\href@noop {} {\bibfield  {journal} {\bibinfo  {journal} {J.
  Nonlinear Sci.}\ }\textbf {\bibinfo {volume} {10}},\ \bibinfo {pages} {69}
  (\bibinfo {year} {2000})}\BibitemShut {NoStop}%
\bibitem [{\citenamefont {Nicol}\ \emph {et~al.}(2001)\citenamefont {Nicol},
  \citenamefont {Melbourne},\ and\ \citenamefont {Ashwin}}]{Nicol-etal-2001}%
  \BibitemOpen
  \bibfield  {author} {\bibinfo {author} {\bibfnamefont {M.}~\bibnamefont
  {Nicol}}, \bibinfo {author} {\bibfnamefont {I.}~\bibnamefont {Melbourne}}, \
  and\ \bibinfo {author} {\bibfnamefont {P.}~\bibnamefont {Ashwin}},\
  }\href@noop {} {\bibfield  {journal} {\bibinfo  {journal} {Nonlinearity}\
  }\textbf {\bibinfo {volume} {14}},\ \bibinfo {pages} {275} (\bibinfo {year}
  {2001})}\BibitemShut {NoStop}%
\bibitem [{\citenamefont {Ashwin}\ \emph {et~al.}(2001)\citenamefont {Ashwin},
  \citenamefont {Melbourne},\ and\ \citenamefont {Nicol}}]{Ashwin-etal-2001}%
  \BibitemOpen
  \bibfield  {author} {\bibinfo {author} {\bibfnamefont {P.}~\bibnamefont
  {Ashwin}}, \bibinfo {author} {\bibfnamefont {I.}~\bibnamefont {Melbourne}}, \
  and\ \bibinfo {author} {\bibfnamefont {M.}~\bibnamefont {Nicol}},\
  }\href@noop {} {\bibfield  {journal} {\bibinfo  {journal} {Physica D}\
  }\textbf {\bibinfo {volume} {156}},\ \bibinfo {pages} {364} (\bibinfo {year}
  {2001})}\BibitemShut {NoStop}%
\bibitem [{\citenamefont {Roberts}\ \emph {et~al.}(2002)\citenamefont
  {Roberts}, \citenamefont {Wulff},\ and\ \citenamefont
  {Lamb}}]{Roberts-etal-2002}%
  \BibitemOpen
  \bibfield  {author} {\bibinfo {author} {\bibfnamefont {M.}~\bibnamefont
  {Roberts}}, \bibinfo {author} {\bibfnamefont {C.}~\bibnamefont {Wulff}}, \
  and\ \bibinfo {author} {\bibfnamefont {J.~S.~W.}\ \bibnamefont {Lamb}},\
  }\href@noop {} {\bibfield  {journal} {\bibinfo  {journal} {J. Differential
  Equations}\ }\textbf {\bibinfo {volume} {179}},\ \bibinfo {pages} {562}
  (\bibinfo {year} {2002})}\BibitemShut {NoStop}%
\bibitem [{\citenamefont {Beyn}\ and\ \citenamefont
  {{Th\"ummler}}(2004)}]{Beyn-Thummler-2004}%
  \BibitemOpen
  \bibfield  {author} {\bibinfo {author} {\bibfnamefont {W.-J.}\ \bibnamefont
  {Beyn}}\ and\ \bibinfo {author} {\bibfnamefont {V.}~\bibnamefont
  {{Th\"ummler}}},\ }\href@noop {} {\bibfield  {journal} {\bibinfo  {journal}
  {SIAM J. Appl. Dyn. Syst.}\ }\textbf {\bibinfo {volume} {3}},\ \bibinfo
  {pages} {85} (\bibinfo {year} {2004})}\BibitemShut {NoStop}%
\bibitem [{\citenamefont {Foulkes}\ and\ \citenamefont
  {Biktashev}(2010)}]{Foulkes-Biktashev-2010}%
  \BibitemOpen
  \bibfield  {author} {\bibinfo {author} {\bibfnamefont {A.~J.}\ \bibnamefont
  {Foulkes}}\ and\ \bibinfo {author} {\bibfnamefont {V.~N.}\ \bibnamefont
  {Biktashev}},\ }\href@noop {} {\bibfield  {journal} {\bibinfo  {journal}
  {Phys. Rev. E}\ }\textbf {\bibinfo {volume} {81}},\ \bibinfo {pages} {046702}
  (\bibinfo {year} {2010})}\BibitemShut {NoStop}%
\bibitem [{\citenamefont {Hermann}\ and\ \citenamefont
  {Gottwald}(2010)}]{Hermann-Gottwald-2010}%
  \BibitemOpen
  \bibfield  {author} {\bibinfo {author} {\bibfnamefont {S.}~\bibnamefont
  {Hermann}}\ and\ \bibinfo {author} {\bibfnamefont {G.~A.}\ \bibnamefont
  {Gottwald}},\ }\href@noop {} {\bibfield  {journal} {\bibinfo  {journal} {SIAM
  J. Appl. Dyn. Syst.}\ }\textbf {\bibinfo {volume} {9}},\ \bibinfo {pages}
  {536} (\bibinfo {year} {2010})}\BibitemShut {NoStop}%
\bibitem [{\citenamefont {Gottwald}\ and\ \citenamefont
  {Melbourne}(2013)}]{Gottwald-Melbourne-2013}%
  \BibitemOpen
  \bibfield  {author} {\bibinfo {author} {\bibfnamefont {G.~A.}\ \bibnamefont
  {Gottwald}}\ and\ \bibinfo {author} {\bibfnamefont {I.}~\bibnamefont
  {Melbourne}},\ }\href@noop {} {\bibfield  {journal} {\bibinfo  {journal}
  {Proc. Nat. Acad. Sci.}\ ,\ \bibinfo {pages} {8411}} (\bibinfo {year}
  {2013})}\BibitemShut {NoStop}%
\bibitem [{\citenamefont {Mantel}\ and\ \citenamefont
  {Barkley}(1996)}]{Mantel-Barkley-1996}%
  \BibitemOpen
  \bibfield  {author} {\bibinfo {author} {\bibfnamefont {R.-M.}\ \bibnamefont
  {Mantel}}\ and\ \bibinfo {author} {\bibfnamefont {D.}~\bibnamefont
  {Barkley}},\ }\href@noop {} {\bibfield  {journal} {\bibinfo  {journal} {Phys.
  Rev. E}\ }\textbf {\bibinfo {volume} {54}} (\bibinfo {year}
  {1996})}\BibitemShut {NoStop}%
\bibitem [{Note1()}]{Note1}%
  \BibitemOpen
  \bibinfo {note} {As discussed earlier, various types of spiral behaviour are
  associated with various types of dynamics (steady-state, periodic,
  quasiperiodic, chaotic) in the base equation $\protect \mathaccentV
  {dot}05F{\protect \color {blue}X}={\protect \color {blue}\eta }({\protect
  \color {blue}X})$. All these types of dynamics are known to occur with
  positive probability. In particular, KAM theory predicts the existence of
  quasiperiodic dynamics. In this paper, we take the point of view that the
  base dynamics is known to be quasiperiodic (in accordance with the
  observations in~\cite {Roessler-Kahlert-1979,Zykov-1986,Winfree-1991} and
  analyse the consequent behaviour in the full system of
  equations.}\BibitemShut {Stop}%
\bibitem [{Note2()}]{Note2}%
  \BibitemOpen
  \bibinfo {note} {When quasiperiodicity arises via KAM theory, near onset,
  phaselocking leads to a complicated structure for the positive measure set
  ${\protect \color {blue}N}$ of frequencies ${\protect \color {blue}\omega }$
  corresponding to quasiperiodic dynamics. The integration should then be over
  ${\protect \color {blue}N}\cap {\protect \color {blue}B}$ rather than the
  whole ball ${\protect \color {blue}B}$. However, writing ${\protect \color
  {blue}N}_{\protect \color {blue}\lambda }$ to indicate dependence on a
  parameter ${\protect \color {blue}\lambda }\to 0$, we have in this situation
  that $\mathop {\protect \mathrm {Vol}}_{{\protect \color {blue}k}}({\protect
  \color {blue}N}_{\protect \color {blue}\lambda }\cap {\protect \color
  {blue}B})\to \mathop {\protect \mathrm {Vol}}_{{\protect \color
  {blue}k}}{\protect \color {blue}B}$ and hence $\protect \qopname \relax
  m{lim}_{{\protect \color {blue}\lambda }\to 0}\DOTSI \intop \ilimits@
  _{{\protect \color {blue}N}_{\protect \color {blue}\lambda }\cap {\protect
  \color {blue}B}} \left \delimiter 69640972 {\protect \color {blue}n}\cdot
  {\protect \color {blue}\omega } \right \delimiter 86418188 ^{-1}\protect
  \tmspace +\thinmuskip {.1667em}\protect \mathrm {d}{\protect \color
  {blue}\omega }=+\infty $. So we obtain the same conclusion in the limit as
  ${\protect \color {blue}\lambda }\to 0$ as before.}\BibitemShut {Stop}%
\bibitem [{\citenamefont {Biktashev}\ and\ \citenamefont
  {Melbourne}(2020)}]{Melbourne-Biktashev-2018}%
  \BibitemOpen
  \bibfield  {author} {\bibinfo {author} {\bibfnamefont {V.~N.}\ \bibnamefont
  {Biktashev}}\ and\ \bibinfo {author} {\bibfnamefont {I.}~\bibnamefont
  {Melbourne}},\ }\href@noop {} {\enquote {\bibinfo {title} {An estimate of the
  bounds of non-compact group extensions of quasiperiodic dynamics},}\
  }\bibinfo {howpublished} {in preparation} (\bibinfo {year}
  {2020})\BibitemShut {NoStop}%
\bibitem [{Note3()}]{Note3}%
  \BibitemOpen
  \bibinfo {note} {This was simulated by second-order center space ($h_x=2/9$),
  forward Euler in time ($h_{\protect \color {blue}t}=1/125$) differencing in a
  box $60\times 60$ with Neumann boundaries. The tip of the spiral was defined
  by $u=v=0$.}\BibitemShut {Stop}%
\bibitem [{Note4()}]{Note4}%
  \BibitemOpen
  \bibinfo {note} {This was simulated with forward Euler, with $h_{\protect
  \color {blue}t}=0.3$. Crudeness of the method was required to make massive
  simulations; this does not affect the conclusions since this is a caricature
  model anyway.}\BibitemShut {Stop}%
\bibitem [{\citenamefont {Bernoulli}(1738)}]{Bernoulli-1738}%
  \BibitemOpen
  \bibfield  {author} {\bibinfo {author} {\bibfnamefont {D.}~\bibnamefont
  {Bernoulli}},\ }\href@noop {} {\bibfield  {journal} {\bibinfo  {journal}
  {Commentarii Academiae Scientiarum Imperialis Petropolitanae}\ }\textbf
  {\bibinfo {volume} {5}},\ \bibinfo {pages} {175} (\bibinfo {year}
  {1738})}\BibitemShut {NoStop}%
\bibitem [{\citenamefont {Sommer}(1954)}]{Sommer-1954}%
  \BibitemOpen
  \bibfield  {author} {\bibinfo {author} {\bibfnamefont {L.}~\bibnamefont
  {Sommer}},\ }\href {\doibase 10.2307/1909829} {\bibfield  {journal} {\bibinfo
   {journal} {Econometrica}\ }\textbf {\bibinfo {volume} {22}},\ \bibinfo
  {pages} {22} (\bibinfo {year} {1954})}\BibitemShut {NoStop}%
\bibitem [{\citenamefont {Feller}(1968)}]{Feller-1968}%
  \BibitemOpen
  \bibfield  {author} {\bibinfo {author} {\bibfnamefont {W.}~\bibnamefont
  {Feller}},\ }\href@noop {} {\emph {\bibinfo {title} {An Introduction to
  Probability Theory and its Applications Volume I}}}\ (\bibinfo  {publisher}
  {John Wiley \& Sons, Inc.},\ \bibinfo {address} {New York},\ \bibinfo {year}
  {1968})\BibitemShut {NoStop}%
\bibitem [{\citenamefont {Boatto}\ and\ \citenamefont
  {Pierrehumbert}(1999)}]{Boatto-Pierrehumbert-1999}%
  \BibitemOpen
  \bibfield  {author} {\bibinfo {author} {\bibfnamefont {S.}~\bibnamefont
  {Boatto}}\ and\ \bibinfo {author} {\bibfnamefont {R.~T.}\ \bibnamefont
  {Pierrehumbert}},\ }\href {\doibase 10.1017/S0022112099005492} {\bibfield
  {journal} {\bibinfo  {journal} {J. Fluid Mech.}\ }\textbf {\bibinfo {volume}
  {394}},\ \bibinfo {pages} {137} (\bibinfo {year} {1999})}\BibitemShut
  {NoStop}%
\bibitem [{\citenamefont {Janssen}\ and\ \citenamefont
  {Janner}(2014)}]{Janssen-Janner-2015}%
  \BibitemOpen
  \bibfield  {author} {\bibinfo {author} {\bibfnamefont {T.}~\bibnamefont
  {Janssen}}\ and\ \bibinfo {author} {\bibfnamefont {A.}~\bibnamefont
  {Janner}},\ }\href {\doibase 10.1107/S2052520614014917} {\bibfield  {journal}
  {\bibinfo  {journal} {Acta Cryst.}\ }\textbf {\bibinfo {volume} {B70}},\
  \bibinfo {pages} {617} (\bibinfo {year} {2014})}\BibitemShut {NoStop}%
\bibitem [{\citenamefont {Subramanian}\ \emph {et~al.}(2016)\citenamefont
  {Subramanian}, \citenamefont {Archer}, \citenamefont {Knobloch},\ and\
  \citenamefont {Rucklidge}}]{Subramanian-etal-2016}%
  \BibitemOpen
  \bibfield  {author} {\bibinfo {author} {\bibfnamefont {P.}~\bibnamefont
  {Subramanian}}, \bibinfo {author} {\bibfnamefont {A.~J.}\ \bibnamefont
  {Archer}}, \bibinfo {author} {\bibfnamefont {E.}~\bibnamefont {Knobloch}}, \
  and\ \bibinfo {author} {\bibfnamefont {A.~M.}\ \bibnamefont {Rucklidge}},\
  }\href@noop {} {\bibfield  {journal} {\bibinfo  {journal} {Phys. Rev. Lett.}\
  }\textbf {\bibinfo {volume} {117}},\ \bibinfo {pages} {075501} (\bibinfo
  {year} {2016})}\BibitemShut {NoStop}%
\bibitem [{\citenamefont {Koren}\ and\ \citenamefont
  {Duerig}(2016)}]{Koren-Duerig-2016}%
  \BibitemOpen
  \bibfield  {author} {\bibinfo {author} {\bibfnamefont {E.}~\bibnamefont
  {Koren}}\ and\ \bibinfo {author} {\bibfnamefont {U.}~\bibnamefont {Duerig}},\
  }\href {\doibase 10.1103/PhysRevB.93.201404} {\bibfield  {journal} {\bibinfo
  {journal} {Phys. Rev. B}\ }\textbf {\bibinfo {volume} {93}},\ \bibinfo
  {pages} {201404(R)} (\bibinfo {year} {2016})}\BibitemShut {NoStop}%
\bibitem [{\citenamefont {Kamiya}\ \emph {et~al.}(2018)\citenamefont {Kamiya},
  \citenamefont {Takeuchi}, \citenamefont {Kabeya}, \citenamefont {Wada},
  \citenamefont {Ishimasa}, \citenamefont {Ochiai}, \citenamefont {Deguchi},
  \citenamefont {Imura},\ and\ \citenamefont {Sato}}]{Kamiya-etal-2018}%
  \BibitemOpen
  \bibfield  {author} {\bibinfo {author} {\bibfnamefont {K.}~\bibnamefont
  {Kamiya}}, \bibinfo {author} {\bibfnamefont {T.}~\bibnamefont {Takeuchi}},
  \bibinfo {author} {\bibfnamefont {N.}~\bibnamefont {Kabeya}}, \bibinfo
  {author} {\bibfnamefont {N.}~\bibnamefont {Wada}}, \bibinfo {author}
  {\bibfnamefont {T.}~\bibnamefont {Ishimasa}}, \bibinfo {author}
  {\bibfnamefont {A.}~\bibnamefont {Ochiai}}, \bibinfo {author} {\bibfnamefont
  {K.}~\bibnamefont {Deguchi}}, \bibinfo {author} {\bibfnamefont
  {K.}~\bibnamefont {Imura}}, \ and\ \bibinfo {author} {\bibfnamefont
  {N.}~\bibnamefont {Sato}},\ }\href {\doibase 10.1038/s41467-017-02667-x}
  {\bibfield  {journal} {\bibinfo  {journal} {Nature Communications}\ }\textbf
  {\bibinfo {volume} {9}},\ \bibinfo {pages} {154} (\bibinfo {year}
  {2018})}\BibitemShut {NoStop}%
\end{thebibliography}
\end{document}